\documentclass[conference]{IEEEtran}
\IEEEoverridecommandlockouts

\usepackage{cite}
\usepackage{amsmath,amssymb,amsfonts}
\usepackage{graphicx}
\usepackage{textcomp} 
\usepackage{xcolor}
\usepackage{multirow}
\usepackage{pifont}
\usepackage{amsthm} 
\usepackage{enumitem} 
\usepackage{nicefrac}  
\usepackage{booktabs}   
\usepackage{bm}
\usepackage{subcaption}
\usepackage{algorithm}
\usepackage{algpseudocode}
\usepackage{hyperref}
\newtheorem{proposition}{proposition} 
\newtheorem{theorem}{theorem}

\def\BibTeX{{\rm B\kern-.05em{\sc i\kern-.025em b}\kern-.08em
    T\kern-.1667em\lower.7ex\hbox{E}\kern-.125emX}}

\newcommand{\Dset}{\mathcal{D}}
\newcommand{\Dpos}{\Dset^{+}}
\newcommand{\Dneg}{\Dset^{-}}
\newcommand{\Tpos}{\mathcal{T}^{+}}
\newcommand{\Tneg}{\mathcal{T}^{-}}
\newcommand{\TCP}{\mathrm{TCP}}
\newcommand{\TCPn}{\mathrm{TCP}_{\mathrm{n}}}
\newcommand{\TCPa}{\mathrm{TCP}_{\alpha}}
\newcommand{\MCP}{\mathrm{MCP}}
\newcommand{\ind}[1]{\mathbb{I}\!\left[#1\right]}

\newcommand{\prob}{\mathbb{P}}
\newcommand{\R}{\mathbb{R}}
\newcommand{\pstar}{p^{*}}
\newcommand{\phat}{\hat{p}}
\newcommand{\chat}{\hat{c}}
\newcommand{\yhat}{\hat{y}}
\newcommand{\ystar}{y^{*}}
\newcommand{\PE}{\mathrm{PE}}
\newcommand{\Margin}{\mathrm{Margin}}
\newcommand{\Energy}{\mathrm{E}}

\theoremstyle{plain}

\theoremstyle{remark}
\newtheorem{remark}{Remark}

\begin{document}

\title{ 
$TCP_\alpha$: Margin-Controlled Confidence estimation for reliable Music Information Retrieval
 \\
 

}

\author{
\IEEEauthorblockN{Parampreet Singh\IEEEauthorrefmark{1},
Anushka Singh\IEEEauthorrefmark{3},
Sumit Kumar\IEEEauthorrefmark{2}, 
Vipul Arora\IEEEauthorrefmark{4}}
\IEEEauthorblockA{
Indian Institute of Technology, Kanpur, India\\
Email: \IEEEauthorrefmark{1}params21@iitk.ac.in, 
\IEEEauthorrefmark{3}anushkas22@iitk.ac.in,
\IEEEauthorrefmark{2}krsumit@iitk.ac.in, 
\IEEEauthorrefmark{4}vipular@iitk.ac.in}
}

\maketitle

\begin{abstract}
Deep neural networks are often overconfident, assigning high confidence even to incorrect predictions. Consequently, users lack a reliable signal for deciding when a prediction can be trusted.
Post-hoc confidence estimation addresses this by training a lightweight auxiliary head over a frozen classifier.
Existing targets, however, suffer from inherent ambiguity: they assign overlapping confidence values to correct and incorrect predictions, while errors near the decision boundary receive confidence scores indistinguishable from correct predictions. In this work, we propose $TCP_\alpha$, a novel confidence target that resolves these limitations by introducing a margin-controlled penalty for misclassified samples. We prove that $TCP_\alpha$ guarantees complete separation between the target values of correct and incorrect predictions, with a separation margin that is independent of the number of classes and increases monotonically with the penalty parameter.
Since accurate classifiers naturally produce very few errors, learning these targets results in a severely imbalanced regression problem. We therefore present a systematic study of training strategies for learning under this imbalance and identify an effective training configuration through extensive ablation studies. We evaluate the proposed approach on r\=aga identification, investigate its robustness under domain shift, and further validate it on frame-wise ornamentation detection without modifying the selected configuration. Across all settings, $TCP_\alpha$ consistently outperforms existing confidence targets for failure prediction. Rejecting only the least-confident 8\% of predictions improves the base model's macro-F1 from 0.89 to 0.98, while fine-tuning the confidence head with only 5\% labeled samples from a new corpus effectively restores performance under domain shift.
It provides an effective and transferable framework for reliable failure prediction in practical music information retrieval systems.
\end{abstract}

\begin{IEEEkeywords}
Confidence Estimation, Failure Prediction, Music Information Retrieval, \textit{R\=aga} Identification, Ornamentation detection 
\end{IEEEkeywords}

\section{Introduction}
\IEEEPARstart{D}{eep} learning has significantly advanced Music Information Retrieval (MIR), enabling state-of-the-art performance across tasks such as automatic transcription~\cite{saha2026_transcription_muller,transcription_juhan}, source separation~\cite{sourceseparation,source_separation}, instrument recognition~\cite{instrument_TASLP}, genre classification~\cite{genre_2}, \textit{r\=aga} identification~\cite{Paramsingh2024explainabledeeplearninganalysis}, and rhythmic-pattern analysis~\cite{drums,kodag2025meta}. Despite these advances, most deep learning models provide only a class prediction without indicating whether that prediction should be trusted.
This absence of reliable confidence estimates limits their usefulness in applications such as music pedagogy~\cite{kumar2026automatic}, where an automated critique delivered with unwarranted certainty can mislead learners, and equally in recommendation systems, where an unreliable prediction shown with high confidence erodes user trust.
Labeled data is a standing bottleneck across MIR, and automatically labeling archives with tags such as \textit{r\=aga} or genre could ease the scarcity and cost of expert annotation~\cite{labelling_1}.
Such a system is useful, however, only if it can identify the predictions it is likely to get wrong and defer them for expert review.
Confidence estimation~\cite{corbiere2019_confidence_main_model,corbiere2021confidence_confidnet_2,guo2017calibration} addresses this by appending a score to each prediction, flagging low-confidence cases for manual review, relabeling, or rejection.

A range of methods have been proposed to estimate the confidence of a model's prediction. The simplest is Maximum Class Probability (MCP)~\cite{hendrycks2017a_baseline}, which uses the predicted softmax probability as the confidence score. However, deep neural networks are systematically overconfident~\cite{corbiere2019_confidence_main_model,corbiere2021confidence_confidnet_2,guo2017calibration}, causing MCP to assign high confidence even to incorrect predictions. Sampling- and ensemble-based approaches~\cite{blundell2015weight,gal2016dropout,lakshminarayanan2017simple} provide improved uncertainty estimates but require multiple forward passes or retraining. 
Post-hoc confidence estimation methods, such as ConfidNet~\cite{corbiere2019_confidence_main_model}, leave the classifier frozen and train a lightweight auxiliary confidence head. Their effectiveness largely depends on the confidence target used for supervision.
The original ConfidNet~\cite{corbiere2019_confidence_main_model} uses 
the True Class Probability (TCP), which reduces overconfidence by using the probability assigned to the ground-truth class. However, its target values for correct and incorrect predictions overlap, and this ambiguity increases with the number of classes. Its normalized variant, $TCP_n$~\cite{corbiere2019_confidence_main_model}, eliminates this overlap for correct predictions but assigns near-boundary errors confidence targets indistinguishable from correct predictions. 
These limitations arise from the design of the confidence target itself rather than from the post-hoc framework, motivating the search for improved target formulations.
Additionally, learning such confidence targets results in a highly imbalanced regression problem. Since the targets are computed from the predictions of an accurate base classifier, most samples receive high confidence targets, while only a small fraction corresponds to errors. Learning from such a skewed target distribution is therefore challenging.

In this work, we propose a modified target $\mathrm{TCP}_\alpha$, which augments the TCP\textsubscript{n} denominator with a penalty term active only on misclassified samples. The penalty term creates separation between mis-classified and correctly classified samples with a margin that is independent of the number of classes and widens monotonically with $\alpha$. 
The target imbalance, however, persists regardless of how the target is constructed, since it is inherited from the base classifier itself. 
We therefore study a range of training strategies to address it, and identify the configuration that works best through ablations.

We first validate the proposed method on r\=aga identification~\cite{Paramsingh2024explainabledeeplearninganalysis}, where the success–error imbalance is particularly severe. We then evaluate whether the same configuration transfers to frame-wise ornamentation detection~\cite{sumit_ornaments_TASLP_2025}, a task with different temporal granularity and class distribution. Finally, we study domain shift using a second r\=aga corpus~\cite{Saraga}.
While our experiments focus on Indian Art Music, the proposed framework can be readily applied to other deep classification problems requiring reliable failure prediction.

The contributions of this work are summarized as follows:

\begin{itemize}
  \item We propose TCP$_\alpha$, a novel confidence target for failure prediction that achieves complete separation between correct and incorrect predictions, independent of the number of classes.
  \item We provide theoretical guarantees establishing complete target separation and a margin independent of the number of classes.
  \item We identify the configuration that best handles the training imbalance on r\=aga identification through a systematic study of training strategies, and show that rejecting a very small fraction of samples yields large performance gains.
  \item We validate on frame-wise ornamentation detection, a task differing in granularity and imbalance, and achieve similar gains by rejection.
  \item We examine transfer under domain shift, and find that fine-tuning the confidence head on a few labeled target-domain samples restores well-separated confidence scores.
\end{itemize}

This work substantially extends our ICASSP workshop publication~\cite{sumit_confidence} by (1) providing theoretical guarantees for the proposed $\mathrm{TCP}_\alpha$ targets; (2) analyzing a range of training strategies for learning the highly imbalanced targets; (3) formulating the problem explicitly as failure prediction rather than confidence calibration; and (4) examining the effect of domain shift.

All the codes are available at: \url{https://tinyurl.com/tcp-alpha}

\section{Related Works}

Confidence and uncertainty estimation for deep neural networks have been studied extensively, with existing methods broadly falling into two categories. The first estimates uncertainty through the training process itself using Bayesian deep learning~\cite{blundell2015weight}, Monte Carlo dropout~\cite{gal2016dropout}, or deep ensembles~\cite{lakshminarayanan2017simple}. While these methods generally improve uncertainty estimation, they require specialized training procedures or multiple stochastic forward passes. The second category comprises \emph{post-hoc} confidence estimation methods, which operate on an already-trained classifier without modifying its decision boundary. Temperature scaling~\cite{guo2017calibration} calibrates the softmax probabilities to better match empirical accuracy, while Maximum Class Probability (MCP) uses the predicted softmax probability as a confidence score~\cite{hendrycks2017a_baseline}.
Other training-free scores include predictive entropy and the Top-2 Margin, which respectively use the full predictive distribution and the gap between the top two class probabilities~\cite{nguyen2022measure_entropy_top_2}, as well as the energy score, which derives a confidence measure directly from the pre-softmax logits~\cite{liu2020energy}. DOCTOR~\cite{granese2021neurips-doctor} similarly utilizes the full softmax vector for failure prediction without additional training.
ConfidNet~\cite{corbiere2019_confidence_main_model} instead trains a lightweight auxiliary confidence head over a frozen classifier using the True Class Probability (TCP) as the regression target, thereby reducing the overconfidence exhibited by MCP on incorrect predictions. Our work builds upon this post-hoc confidence estimation framework by proposing a new target formulation that addresses the limitations of existing TCP-based targets.

An important distinction in this area is between \emph{confidence calibration} and \emph{failure prediction}~\cite{zhu2023rethinking}.
Calibration seeks probabilistic correctness, ensuring that predicted confidence matches empirical accuracy, and is commonly evaluated using metrics such as the Expected Calibration Error (ECE)~\cite{guo2017calibration,naeini2015obtaining}. However, recent studies have shown that ECE can be insensitive to important calibration failures and does not always reflect the practical usefulness of confidence estimates~\cite{chidambaram2024how_ECE_flaw,Nixon_2019_CVPR_Workshops_ECE_flaws}. More importantly, good calibration does not necessarily imply good failure prediction. The work~\cite{zhu2023rethinking} demonstrate that several methods which improve calibration error or out-of-distribution detection can simultaneously reduce the separability between correct and incorrect predictions, which is precisely the opposite of the objective in failure prediction. Our goal is therefore not to estimate calibrated probabilities, but to learn a confidence score that reliably distinguishes correct predictions from incorrect ones, enabling unreliable predictions to be rejected, deferred for human review, or prioritized for relabeling. Accordingly, we evaluate using failure prediction metrics such as AUPR-Error, AUROC, and FPR@95\%TPR~\cite{corbiere2019_confidence_main_model} rather than calibration metrics like ECE.
Failure prediction is also closely related to selective prediction or confidence-based abstention, which studies using confidence or uncertainty to trade predictive risk against coverage~\cite{geifman2017selective,elyaniv2010foundations,franc2023optimal_reject,geifman2019selectivenet_reject}. Complementary to this line of work, we focus on learning a more reliable confidence score that can be used within such frameworks.


Learning confidence targets presents another challenge, as the confidence targets are highly imbalanced. Classical imbalance mitigation techniques such as class reweighting, oversampling, and focal loss~\cite{lin2017focal,cui2019class,chawla2002smote} are primarily designed for discrete class labels. Our setting is therefore closer to Deep Imbalanced Regression~\cite{yang2021delving}, which introduces Label Distribution Smoothing (LDS) and Feature Distribution Smoothing (FDS) for skewed continuous targets. However, unlike conventional regression problems, the imbalance in our setting is not an intrinsic property of the dataset but is induced by the high accuracy of the underlying classifier itself. We therefore investigate training strategies specifically tailored to this setting.

Within Music Information Retrieval (MIR), uncertainty and confidence estimation have received comparatively little attention. Existing studies have explored uncertainty in music genre classification~\cite{genre,ye22b_interspeech_genre_confidence}, musical improvisation~\cite{music_improv}, music emotion recognition~\cite{emotion}, and melody estimation~\cite{Ayush_ICASSP_evidential,melody_kavya}. In Indian Art Music, previous work has primarily focused on improving predictive performance for tasks such as \textit{r\=aga} identification~\cite{Paramsingh2024explainabledeeplearninganalysis,phononet_2019,raga_ontology,singh2025identification} and ornamentation detection~\cite{sumit_ornaments_TASLP_2025}, without estimating the reliability of model predictions.  
Our earlier work~\cite{sumit_confidence} took the first step toward closing this gap by introducing a modified TCP target for these two tasks.
The present work substantially extends that direction through a more rigorous theoretical treatment, improved learning strategies, and broader experimental validation.

\section{Preliminaries and Problem Formulation}
 
\subsection{Setup}
Consider a labeled dataset $\Dset=\{(x_i,\ystar_i)\}_{i=1}^{N}$ of $N$ i.i.d.\ samples, where $x_i\in\mathbb{R}^{d}$ denotes the input sample and $y_i^*\in\mathcal{Y}=\{1,\ldots,K\}$ is its ground-truth class.
Let $f_w$ denote a K-class classifier with parameters $w$ that induces a predictive distribution $\prob(Y\mid w,x)$ through a softmax layer, and let
\begin{equation}
  \yhat \;=\; \arg\max_{k\in\mathcal{Y}} \prob(Y=k\mid w,x).
  \label{eq:pred}
\end{equation}
Throughout this paper, we denote the probability assigned to the ground-truth class as $p^*$ and to the predicted class as $\hat p$:
\begin{equation}
  \pstar \;=\; \prob(Y=\ystar\mid w,x),
  \qquad
  \phat \;=\; \prob(Y=\yhat\mid w,x),
\end{equation}
respectively. 
We partition the dataset into the \emph{success} set $\Dpos=\{i:\yhat_i=\ystar_i\}$ ($f_w$ makes a correct prediction) and the \emph{error} set $\Dneg=\{i:\yhat_i\neq\ystar_i\}$ ($f_w$ makes an incorrect prediction), so that $\Dpos\cup\Dneg=\{1,\dots,N\}$.

Following the post-hoc confidence estimation framework of ConfidNet~\cite{corbiere2019_confidence_main_model}, the base classifier $f_w$ is trained once and then \emph{frozen}. As shown in Fig.~\ref{fig:Overview}, our objective is to attach to every prediction a scalar confidence score $\chat(x;\theta)\in[0,1]$, produced by an auxiliary head with parameters $\theta$ operating on the frozen backbone's penultimate representation $z(x)\in\R^{d_f}$. The head is trained by regression against a \emph{confidence target} $c^{*}(x,\ystar)$,
\begin{equation}
  \mathcal{L}_{\mathrm{conf}}(\theta;\Dset)
  \;=\;
  \frac{1}{N}\sum_{i=1}^{N}
  \bigl(\chat(x_i;\theta)-c^{*}(x_i,\ystar_i)\bigr)^{2}.
  \label{eq:conf-loss}
\end{equation}
The confidence target is a function of the ground-truth label and is therefore available only at training time. The task is to reproduce it at inference from $z(x)$ alone. 
Consequently, the effectiveness of this framework depends critically on the design of the target $c^*(x,y^*)$, which is the primary focus of this work.

\begin{figure}[h!]
    \centering
    \includegraphics[width=0.43\textwidth]{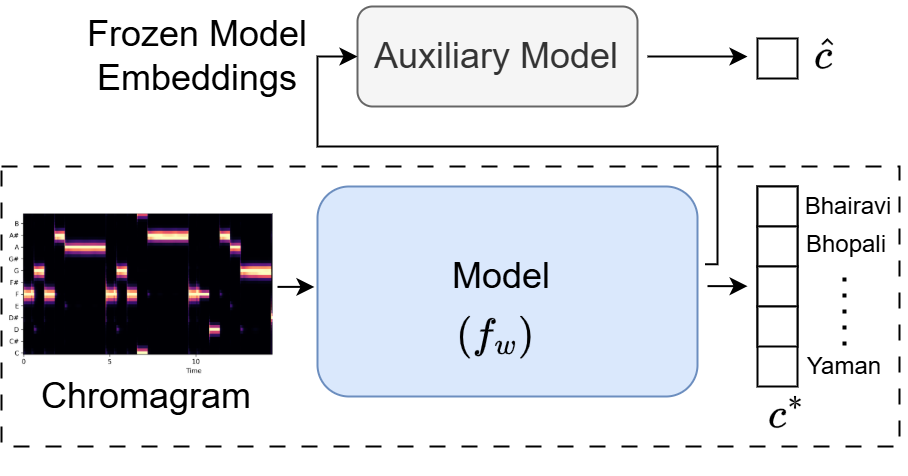}
    \caption{Overview of the post-hoc confidence estimation module. The
  backbone and classifier are frozen after base training; only
  the confidence head is trained, against the TCP$_\alpha$
  target derived from the frozen classifier's softmax output and the
  ground-truth label.}
    \label{fig:Overview}
\end{figure}

\subsection{Existing Targets and Their Failure Modes}
\label{sec:existing-targets}
We briefly review the existing target formulations and highlight the limitations that motivate our proposed target. 
\paragraph{Maximum Class Probability (MCP)} The simplest confidence score is $\MCP(x)=\phat$ \cite{hendrycks2017a_baseline}, which is just the probability of the predicted class for the base classifier. It suffers from the problem of models being over-confident. Since $\phat\ge 1/K$ always, MCP takes values in $[1/K,1]$ for both successes and for errors, which results in an overlap in confidence scores for both (Fig.~\ref{fig:mcp_TCP_proposed_comparison}).
 
\paragraph{True Class Probability (TCP)} 
ConfidNet~\cite{corbiere2019_confidence_main_model} instead learns $\TCP(x,\ystar)=\pstar$~\cite{corbiere2019_confidence_main_model}, which is the probability associated with the True class instead of the predicted class. 
Two guarantees follow immediately:
\begin{enumerate}
  \item if $\pstar>\tfrac12$ then $\yhat=\ystar$;
  \item if $\pstar<\tfrac1K$ then $\yhat\neq\ystar$.
\end{enumerate}
Neither is vacuous, but together they leave an ambiguity band: because an error forces $\pstar<\phat$ and hence $\pstar<\tfrac12$, the error targets occupy $[0,\tfrac12)$ while the success targets occupy $[\tfrac1K,1]$. Any target value in $[\tfrac1K,\tfrac12)$ is therefore attainable by a correct and by an incorrect prediction alike, and no threshold on TCP can separate the two. The width of this band grows with $K$.
 
\paragraph{Normalised TCP (TCP\textsubscript{n})} To eliminate the ambiguity on the success side, \cite{corbiere2019_confidence_main_model} normalises the TCP score by the predicted-class probability,
\begin{equation}
  \TCPn(x,\ystar) \;=\; \frac{\pstar}{\phat},
  \label{eq:tcpn}
\end{equation}
which pins every success to exactly $1$. The authors claim stronger theoretical guarantees compared to TCP, but this creates an additional problem for the error targets. Consider an error with $\pstar=m-\varepsilon$ and $\phat=m+\varepsilon$ for some $m$ and $\varepsilon>0$. Then $\TCPn\to 1$ as $\varepsilon\to 0^{+}$, even though the prediction is wrong. So, near-boundary errors are thus assigned targets indistinguishable from confident successes. 
 
\begin{figure}[htbp]
\includegraphics[width=0.9\columnwidth]{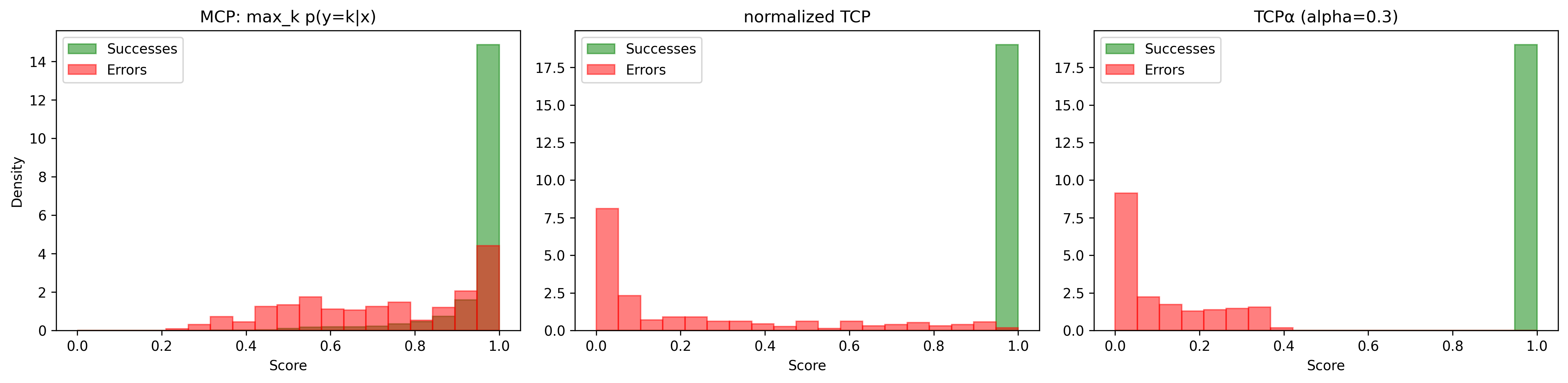}
\caption{Comparison of the target distributions induced by MCP, TCP$_n$, and the proposed TCP$_\alpha$ for PIM~\cite{Paramsingh2024explainabledeeplearninganalysis} dataset. The x-axis represents the targets and y-axis represents the density. TCP$_\alpha$ produces complete separation between success and error targets.}
\label{fig:mcp_TCP_proposed_comparison}
\end{figure}

\section{Proposed Method}
\subsection{The TCP\texorpdfstring{$_\alpha$}{alpha} Confidence Target}
\label{sec:target}
The limitations of TCP and TCP$_n$ arise because the corresponding targets cannot completely separate correct and incorrect predictions. We therefore introduce TCP$_\alpha$, which modifies TCP$_n$ through an additive penalty that is active only for misclassified samples.
We define: 
\begin{equation}
  \TCPa(x,\yhat,\ystar)
  \;=\;
  \frac{\pstar}
       {\phat \;+\; \ind{\yhat\neq\ystar}\,\bigl(\pstar+\alpha\bigr)},
  \label{eq:tcpa}
\end{equation}
where $\alpha\ge 0$ is a penalty hyperparameter. 
The proposed formulation preserves the desirable property that every correctly classified sample receives a target of one, while explicitly pushing all error targets away from the success region.
When $\yhat\neq\ystar$ the denominator is inflated by $\pstar+\alpha$, which strictly depresses the target below its TCP\textsubscript{n} value. Fig.~\ref{fig:mcp_TCP_proposed_comparison} shows a comparison of the MCP, TCP\textsubscript{n} and the proposed TCP\texorpdfstring{$_\alpha$}{alpha} targets.

It is worth noting that the $\pstar$ term inside the penalty is what removes the near-boundary pathology of TCP\textsubscript{n} \emph{even at $\alpha=0$}.
The constant $\alpha$ then provides continuous, monotone control over how far below the success targets the error targets are pushed.

\subsection{Theoretical Guarantees}
 
Let $\Tpos=\{\TCPa(x,\yhat,\ystar):\yhat=\ystar\}$ and $\Tneg=\{\TCPa(x,\yhat,\ystar):\yhat\neq\ystar\}$
denote the sets of achievable success and error targets respectively.
 
\begin{proposition}[Success targets are exactly one] \label{prop:success}
For any $x$ with $\yhat=\ystar$ and any $\alpha\ge 0$, $\TCPa(x,\yhat,\ystar)=1$. Hence $\Tpos=\{1\}$.
\end{proposition}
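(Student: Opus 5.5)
The plan is to substitute the hypothesis $\yhat=\ystar$ directly into the definition \eqref{eq:tcpa} and simplify. First, because $\yhat=\ystar$, the indicator $\ind{\yhat\neq\ystar}$ equals $0$. The penalty term $\ind{\yhat\neq\ystar}\,(\pstar+\alpha)$ therefore vanishes for every $\alpha\ge 0$, and the denominator reduces to $\phat$. In this case $\TCPa$ coincides with $\TCPn$ in \eqref{eq:tcpn}.

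Second, the predicted-class and true-class probabilities are the same quantity: $\pstar=\prob(Y=\ystar\mid w,x)=\prob(Y=\yhat\mid w,x)=\phat$. Hence $\TCPa=\pstar/\phat=1$.

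The only point needing care is that the quotient is well defined, i.e.\ that $\phat>0$. This follows because $\phat$ is the maximum of a softmax distribution over $K$ classes, so $\phat\ge 1/K>0$; more simply, softmax outputs are strictly positive.

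Since this holds for every $x$ with $\yhat=\ystar$ and every $\alpha\ge 0$, every achievable success target equals $1$, so $\Tpos\subseteq\{1\}$. For equality, $\Tpos$ must be nonempty, which holds whenever at least one correctly classified input exists (as it does for $\Dpos\neq\emptyset$). Then $\Tpos=\{1\}$. There is no substantive obstacle; nonemptiness is the only implicit assumption, and I would state it explicitly.
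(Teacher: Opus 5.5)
Your proposal is correct and matches the paper's proof: when $\yhat=\ystar$ the indicator vanishes, the denominator reduces to $\phat=\pstar$, and the ratio is $1$. Your extra checks, that $\phat\ge 1/K>0$ so the quotient is well defined and that $\Tpos$ is nonempty, are harmless refinements the paper leaves implicit.
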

 
\begin{proof}
When $\yhat=\ystar$ the indicator in \eqref{eq:tcpa} vanishes and $\phat=\pstar$. So,
$\TCPa=\pstar/\pstar=1$.
\end{proof}
 
\begin{proposition}[Tight ceiling on error targets]
\label{prop:ceiling}
For any $x$ with $\yhat\neq\ystar$ and any $\alpha\ge 0$,
\begin{equation}
  \TCPa(x,\yhat,\ystar) \;<\; \frac{1}{2(1+\alpha)}.
  \label{eq:ceiling}
\end{equation}
The bound is tight: $\sup\Tneg = \tfrac{1}{2(1+\alpha)}$, approached in the limit $\phat\to\pstar$, $\pstar\to\tfrac12$.
\end{proposition}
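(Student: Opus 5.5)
On the error set the indicator in \eqref{eq:tcpa} equals one, so the target reads $\TCPa=\pstar/(\phat+\pstar+\alpha)$. The plan is to maximise this ratio over the feasible region of pairs $(\pstar,\phat)$ that an error can produce, and then check that the supremum is not attained.

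\textbf{Feasible region.} For an error, $\yhat\neq\ystar$ and $\yhat$ is the argmax, so $\phat\ge\pstar$. The two classes are distinct, so $\pstar+\phat\le 1$, which together with $\phat\ge\pstar$ forces $\pstar\le\tfrac12$.

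\textbf{Reduction to one variable.} The map $\phat\mapsto \pstar/(\phat+\pstar+\alpha)$ is nonincreasing in $\phat$. Hence
\begin{equation*}
  \TCPa\le \frac{\pstar}{2\pstar+\alpha}=:g(\pstar),
\end{equation*}
and $g$ is nondecreasing on $[0,\tfrac12]$, since $g'(t)=\alpha/(2t+\alpha)^2\ge0$. This gives $\TCPa\le g(\tfrac12)=\tfrac{1}{1+\alpha}\cdot\tfrac12$.

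Here is the obstacle: the stated bound is $\tfrac{1}{2(1+\alpha)}$, whereas $g(\tfrac12)=\tfrac{1/2}{1+\alpha}=\tfrac{1}{2+2\alpha}$. These are equal, so the numbers match.

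\textbf{Strictness.} Equality in the chain would require both $\phat=\pstar$ and $\pstar=\tfrac12$.
\begin{itemize}
  \item If $\alpha>0$, the second inequality is strict unless $\pstar=\tfrac12$. But $\pstar=\tfrac12$ forces $\phat=\tfrac12$, i.e.\ an exact tie between the two classes.
  \item If $\alpha=0$, $g$ is constant at $\tfrac12$, so the only way to have $\TCPa<\tfrac12$ is a strict first inequality, $\phat>\pstar$.
\end{itemize}
In both cases strictness therefore rests on excluding exact ties $\phat=\pstar$ for errors. The argument is that under the argmax with a strict tie-breaking convention, a sample with $\phat=\pstar$ can be treated as correct, or that ties form a measure-zero set. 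I would state this assumption explicitly, namely that errors satisfy $\phat>\pstar$ strictly. That gives strict inequality in the first step and hence $\TCPa<\tfrac{1}{2(1+\alpha)}$. Pinning down this tie convention is the delicate step, and I would handle it by adopting that convention.

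\textbf{Tightness.} Take $K\ge2$ and set $\pstar=\tfrac12-\varepsilon$, $\phat=\tfrac12+\varepsilon$, with all other classes at zero (or arbitrarily small mass if strict positivity of the softmax is required). This is a valid error configuration. Its target is
\begin{equation*}
  \frac{\tfrac12-\varepsilon}{1+\alpha}\;\to\;\frac{1}{2(1+\alpha)}\qquad\text{as }\varepsilon\to0^{+}.
\end{equation*}
So $\sup\Tneg=\tfrac{1}{2(1+\alpha)}$, approached exactly in the limit $\phat\to\pstar$, $\pstar\to\tfrac12$ named in the statement.
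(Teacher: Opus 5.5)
Your proof is correct and follows the same route as the paper: you reduce the error target to $\pstar/(\phat+\pstar+\alpha)$, use $\phat>\pstar$ to bound it by $\pstar/(2\pstar+\alpha)$, use $\phat+\pstar\le 1$ to get $\pstar<\tfrac12$, and approach $\pstar=\phat=\tfrac12$ for tightness. You are more careful than the paper on points it leaves implicit---the monotonicity of $t\mapsto t/(2t+\alpha)$ that justifies substituting $\pstar=\tfrac12$, and the fact that strictness (notably at $\alpha=0$) rests on the tie convention the paper invokes as ``strictness of the argmax''---and your tightness sequence also covers $K=2$, unlike the paper's, whose $\phat_n+\pstar_n<1$ requires a third class.
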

 
\begin{proof}
For an error, \eqref{eq:tcpa} gives
$\TCPa=\pstar/(\phat+\pstar+\alpha)$. Strictness of the argmax gives
$\phat>\pstar$, hence
\begin{equation}
  \phat+\pstar+\alpha \;>\; 2\pstar+\alpha,
  \qquad\text{so}\qquad
  \TCPa \;<\; \frac{\pstar}{2\pstar+\alpha}.
  \label{eq:step1}
\end{equation}
Now, since $\phat>\pstar$ and $\phat+\pstar\le1$, we have
$2\pstar<\phat+\pstar\le1$, and hence $\pstar<\tfrac12$. Eq:~\eqref{eq:step1} becomes:

$$\TCPa<\tfrac{1/2}{1+\alpha}=\tfrac{1}{2(1+\alpha)}$$
 
For tightness, fix $\alpha\ge 0$ and take the sequence $\pstar_n=\tfrac12-\tfrac{1}{n}$, $\phat_n=\tfrac12-\tfrac{1}{2n}$, which satisfies $\phat_n>\pstar_n$ and $\phat_n+\pstar_n<1$ for all $n\ge 3$. 
Thus, each pair $(\pstar_n,\phat_n)$ corresponds to a valid error prediction.
Then,

\begin{equation}
\begin{aligned}
TCP_{\alpha n}
&=
\frac{\tfrac12-\tfrac1n}
     {\bigl(\tfrac12-\tfrac{1}{2n}\bigr)
      +\bigl(\tfrac12-\tfrac1n\bigr)+\alpha} \\
&=
\frac{\tfrac12-\tfrac1n}
     {1+\alpha-\tfrac{3}{2n}}
\xrightarrow{\;n\to\infty\;}
\frac{1}{2(1+\alpha)}.
\end{aligned}
\end{equation}

Thus, the target assigned to an incorrect prediction is strictly bounded by $\frac{1}{2(1+\alpha)}$, and this bound is tight
\end{proof}


\begin{theorem}[Complete target separation and separation margin]
\label{thm:separation}
For every $\alpha\ge 0$, $\Tpos\cap\Tneg=\emptyset$, and the separation
margin is
\begin{equation}
  \Delta(\alpha)
  \;\triangleq\;
  \inf_{v\in\Tpos} v \;-\; \sup_{v\in\Tneg} v
  \;=\; 1-\frac{1}{2(1+\alpha)} .
  \label{eq:margin}
\end{equation}
The margin is independent of $K$, satisfies $\Delta(0)=\tfrac12$, is
strictly increasing in $\alpha$, and $\Delta(\alpha)\to 1$ as
$\alpha\to\infty$.
\end{theorem}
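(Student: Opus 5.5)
The plan is to read the theorem off directly from Proposition~\ref{prop:success} and Proposition~\ref{prop:ceiling}, with no new estimates. First, Proposition~\ref{prop:success} gives $\Tpos=\{1\}$, so $\inf_{v\in\Tpos}v=1$. Second, Proposition~\ref{prop:ceiling} states that every $v\in\Tneg$ satisfies $v<\tfrac{1}{2(1+\alpha)}$ and that $\sup\Tneg=\tfrac{1}{2(1+\alpha)}$. Since $\alpha\ge 0$, we have $\tfrac{1}{2(1+\alpha)}\le\tfrac12<1$. Hence $1\notin\Tneg$, which gives $\Tpos\cap\Tneg=\emptyset$. Subtracting the two extremal values then yields \eqref{eq:margin}.

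A small point I would check is that the supremum in \eqref{eq:margin} is taken over a nonempty set. Nonemptiness needs $K\ge 2$, where error predictions exist; for $K\ge 3$ this is witnessed by the sequence used in the proof of Proposition~\ref{prop:ceiling}. The case $K=2$ is the step I expect to be the main obstacle. The tightness sequence there has $\phat_n+\pstar_n<1$, which is impossible with two classes since then $\phat+\pstar=1$. So I would revisit tightness for $K=2$ directly. With $\pstar=\tfrac12-\varepsilon$ and $\phat=\tfrac12+\varepsilon$, the target becomes $(\tfrac12-\varepsilon)/(1+\alpha)$, which tends to $\tfrac{1}{2(1+\alpha)}$ as $\varepsilon\to0^+$. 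The supremum is therefore the same for every $K\ge2$.

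That observation also establishes independence of $K$. Once the supremum is known to be attained in the limit for each $K\ge2$, no quantity in the formula for $\Delta(\alpha)$ involves $K$. This contrasts with the $[\tfrac1K,\tfrac12)$ ambiguity band of plain TCP.

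The remaining claims are elementary calculus on $\Delta(\alpha)=1-\tfrac{1}{2(1+\alpha)}$:
\begin{itemize}
\item Evaluating at $\alpha=0$ gives $\Delta(0)=1-\tfrac12=\tfrac12$.
\item The derivative is $\Delta'(\alpha)=\tfrac{1}{2(1+\alpha)^2}>0$, so $\Delta$ is strictly increasing. Alternatively, $\alpha\mapsto\tfrac{1}{1+\alpha}$ is strictly decreasing on $[0,\infty)$.
\item Since $\tfrac{1}{2(1+\alpha)}\to0$, we get $\Delta(\alpha)\to1$ as $\alpha\to\infty$.
\end{itemize}
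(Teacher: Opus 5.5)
Your proposal is correct and follows the paper's proof essentially verbatim. You combine $\inf\Tpos=1$ from Proposition~\ref{prop:success} with $\sup\Tneg=\tfrac{1}{2(1+\alpha)}<1$ from Proposition~\ref{prop:ceiling}, then read off $\Delta(0)$, the derivative, and the limit; your separate $K=2$ tightness check (where $\phat+\pstar=1$ gives targets $(\tfrac12-\varepsilon)/(1+\alpha)$) is a worthwhile refinement, since the paper's sequence has $\phat_n+\pstar_n<1$ and so does not cover $K=2$, a case its $K$-independence claim implicitly needs.
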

 
\begin{proof}
By Proposition~\ref{prop:success}, $\inf\Tpos=1$. By Proposition~\ref{prop:ceiling},
$\sup\Tneg=\frac{1}{2(1+\alpha)}<1$. Hence,
$\Tpos\cap\Tneg=\emptyset$, and
\[
\Delta(\alpha)
=1-\frac{1}{2(1+\alpha)}.
\]
Finally,
\[
\frac{\mathrm{d}\Delta}{\mathrm{d}\alpha}
=\frac{1}{2(1+\alpha)^2}>0,
\]
so the margin increases strictly with $\alpha$, with
$\Delta(0)=\frac12$ and $\Delta(\alpha)\to1$ as
$\alpha\to\infty$.
\end{proof}

\begin{proposition}[Monotone penalty]
\label{prop:monotone}
For any fixed error sample with $\pstar>0$, $\TCPa$ is strictly decreasing in $\alpha$:
\begin{equation}
  \frac{\partial \TCPa}{\partial\alpha}
  \;=\; -\,\frac{\pstar}{(\phat+\pstar+\alpha)^{2}} \;<\; 0 .
\end{equation}
For success samples $\partial\TCPa/\partial\alpha=0$.
\end{proposition}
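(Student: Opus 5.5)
The plan is to treat the two cases of the indicator in \eqref{eq:tcpa} separately, since the statement makes a claim for each. For a success sample ($\yhat=\ystar$), I will invoke Proposition~\ref{prop:success}: there $\TCPa\equiv 1$ for every $\alpha\ge 0$. The function is constant in $\alpha$, so $\partial\TCPa/\partial\alpha=0$ immediately. Equivalently, the indicator vanishes and the expression $\pstar/\phat$ contains no $\alpha$.

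For an error sample ($\yhat\neq\ystar$), I will hold $x$ fixed. Then $\pstar$ and $\phat$ are fixed numbers, because the base classifier is frozen and neither quantity depends on $\alpha$. As in the proof of Proposition~\ref{prop:ceiling}, the target reduces to
\[
g(\alpha)=\frac{\pstar}{\phat+\pstar+\alpha}.
\]
The first check is that the denominator is strictly positive on $\alpha\ge 0$. This holds because $\phat>\pstar\ge 0$, so $\phat>0$. Hence $g$ is differentiable on $[0,\infty)$, taking a one-sided derivative at $\alpha=0$. Differentiating the reciprocal gives
\[
g'(\alpha)=-\frac{\pstar}{(\phat+\pstar+\alpha)^{2}}.
\]
The hypothesis $\pstar>0$ together with the positive squared denominator makes this strictly negative. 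A function with strictly negative derivative on an interval is strictly decreasing, by the mean value theorem. Alternatively, for $\alpha_1<\alpha_2$ one can compare directly: the denominators satisfy $\phat+\pstar+\alpha_1<\phat+\pstar+\alpha_2$, and the numerator is positive, which gives the strict inequality without calculus.

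The only step needing care is the role of the hypothesis $\pstar>0$. If $\pstar=0$, the target is identically $0$ in $\alpha$ and not strictly decreasing, which is why the statement excludes that case. I will also note that positivity of the denominator relies on $\phat>0$, which the strict argmax guarantees. Beyond this the argument is routine.
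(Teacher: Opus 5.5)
Your proposal is correct and takes essentially the same route as the paper. You split on the indicator, use Proposition~\ref{prop:success} for successes, and for errors write $\TCPa=\pstar/(\phat+\pstar+\alpha)$ with $\phat+\pstar$ constant in $\alpha$ and positive, then differentiate; your extra remarks on strict monotonicity (via the mean value theorem or a direct comparison of denominators) and on why $\pstar>0$ is needed are sound but go beyond what the paper spells out.
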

 
\begin{proof}
For errors, $C\triangleq\phat+\pstar>0$ is constant in $\alpha$ and $\TCPa=\pstar/(C+\alpha)$; differentiating gives the stated derivative, which is negative since $\pstar>0$. For successes Proposition~\ref{prop:success} gives $\TCPa\equiv 1$, whose derivative is zero.
\end{proof}
 
Proposition~\ref{prop:monotone} identifies $\alpha$ as a \emph{monotone margin-control parameter}: increasing it depresses every error target while leaving every success target pinned at $1$, so the two groups are driven apart uniformly and predictably.

\begin{remark}[Contrast with the baselines]
\label{rem:contrast}
For MCP the two target ranges coincide, for TCP the ranges overlap on $[\tfrac1K,\tfrac12)$, a band that \emph{widens} with the number of classes, so the guarantee degrades in exactly the many-class regime where confidence estimation is the most needed. For TCP\textsubscript{n} the ranges touch ($\sup\Tneg=\inf\Tpos=1$) and the margin is zero. Theorem~\ref{thm:separation} shows that TCP$_\alpha$ is the only member of the family with a strictly positive, $K$-independent margin, and it attains this already at $\alpha=0$. The role of $\alpha>0$ is therefore \emph{not} to create separation but to enlarge it. Table~\ref{tab:theory} summarises all these differences.
\end{remark}

 
 
\begin{table}[t]
\centering
\caption{Theoretical properties of confidence targets. $\Tpos,\Tneg$ are the achievable target ranges for successes and errors; $K$ is the number of classes; $\alpha\ge 0$ is the penalty of TCP$_\alpha$.}
\label{tab:theory}
\small
\begin{tabular}{lcccc}
\toprule
Criterion & $\Tpos$ & $\Tneg$ & Overlap & Margin $\Delta$ \\
\midrule
MCP  & $[\tfrac1K,1]$ & $[\tfrac1K,1]$ & always & $0$ \\
TCP  & $[\tfrac1K,1]$ & $[0,\tfrac12)$ & $[\tfrac1K,\tfrac12)$ & $0$ \\
TCP\textsubscript{n} & $\{1\}$ & $(0,1)$ & at boundary & $0$ \\
\textbf{TCP$_\alpha$ (ours)} & $\{1\}$ &
  $\bigl(0,\tfrac{1}{2(1+\alpha)}\bigr)$ & none &
  $1-\tfrac{1}{2(1+\alpha)}$ \\
\bottomrule
\end{tabular}
\end{table}

\subsection{Learning the Confidence Head}
\label{sec:method}

Since the base classifier is highly accurate, the target distribution induced by TCP$_\alpha$ is extremely skewed, with only a small fraction corresponding to errors. Although the target provides complete separation between correct and incorrect predictions, the resulting regression problem is dominated by the abundant high-confidence samples, making informative low-confidence predictions difficult to learn.

To address this challenge, we employ an imbalance-aware training strategy consisting of stratified mini-batch sampling and class-conditional loss weighting.
We enforce a fixed error-to-success ratio in every mini-batch during training. Let $r = |\mathcal{D}_-^{\text{batch}}| / |\mathcal{D}_+^{\text{batch}}|$ be the desired ratio. Since $|\mathcal{D}_-| \ll |\mathcal{D}_+|$, error samples are exhausted faster than successes under random sampling, so we maintain a circular buffer over $\mathcal{D}_-$. With $\pi^{(t)}$ the $t$-th random permutation of $\mathcal{D}_-$, the buffer for epoch $e$ concatenates successive permutations,
\begin{equation}
    \mathcal{B}_-(e) = \pi^{(1)} \,\|\, \pi^{(2)} \,\|\, \cdots,
\end{equation}
drawing fresh permutations until every mini-batch in the epoch is filled. Each error sample thus recurs several times before all successes are seen once, without repeating within a single permutation. We combine this with class-conditional weighting, which weights each sample by the inverse frequency of its base class in the error pool.



\section{Experimental Setup}\label{sec:setup}
 
\subsection{Datasets}\label{sec:datasets}

\subsubsection{R\=aga Identification}

The Prasar Bharati Indian Music (PIM) dataset~\cite{Paramsingh2024explainabledeeplearninganalysis} contains 191 hours of polyphonic Hindustani concert recordings annotated with r\=aga and tonic labels. Following~\cite{Paramsingh2024explainabledeeplearninganalysis}, we use the 12 most frequent r\=agas, segmented into non-overlapping 30-second audio chunks.
To evaluate robustness under domain shift, we use the Saraga-Hindustani dataset~\cite{Saraga}, an independently collected IAM corpus with different recording conditions and provenance. It contains recordings from 11 of the selected r\=agas, with substantially fewer samples per class than PIM. 

\subsubsection{Ornamentation Detection}

The R\=aga Ornamentation Detection corpus~\cite{sumit_ornaments_TASLP_2025} contains mono-channel expert Hindustani vocal recordings with frame-wise labels for seven ornament types recorded by 2 expert Hindustani Musicians. 
We use audio from Expert-2, segmented into 10-second clips with frame-wise ornamentation labels.

\begin{table}[t]
\centering
\caption{Dataset statistics and the induced imbalance seen by the confidence head. $|\Dneg|$ is the number of base-model errors available for training the confidence head.
K is the number of classes.
Base F1 shows performance of the base classifier on which the confidence head is being trained.}
\label{tab:datasets}
\small
\begin{tabular}{lccccc}
\toprule
Dataset & Unit & $K$ &
$|\Dpos|\!:\!|\Dneg|$ & Base F1 \\
\midrule
PIM      & 30\,s chunk & 12 & $\approx$8.1:1 & 0.89 \\
ROD    & frame       & 7 & $\approx$2.2:1        & 0.69 \\
Saraga   & 30\,s chunk & 11 & $\approx$3.2:1 & 0.76 \\
\bottomrule
\end{tabular}
\end{table}

\subsection{Base Models}

For \emph{r\=aga identification}, we follow~\cite{Paramsingh2024explainabledeeplearninganalysis} and use tonic-normalized chromagram features as input to a CNN--LSTM classifier trained with the cross-entropy loss on the PIM dataset. For \emph{ornamentation detection}, we use the Temporal Convolutional Network proposed in~\cite{sumit_ornaments_TASLP_2025}, which operates on chromagram features using periodic padding and dilated convolutions.

Both classifiers are trained to convergence and subsequently frozen. The proposed confidence head is then attached to the penultimate feature representation while leaving the backbone and the original classification layer unchanged. For the domain-shift experiments on Saraga dataset, we directly evaluate the confidence head using the classifier trained on PIM, followed by fine-tuning on a few labeled  Saraga samples.

Table~\ref{tab:datasets} summarizes the datasets, the performance of the corresponding base classifiers, and the resulting success-to-error ratios. The final column highlights the central challenge addressed in this work: a classifier with a macro-F1 of approximately $0.89$ produces only about one error for every eight correct predictions, resulting in a highly imbalanced confidence learning problem.

Unless otherwise stated, all ablation studies are performed on the PIM dataset. For the Saraga and ornamentation datasets, we report only the primary baselines and the final selected training strategy.

\subsection{Baselines}
\label{sec:baselines}
We compare the proposed method against two categories of confidence estimators. The first comprises training-free scores computed directly from the frozen classifier, while the second consists of learned confidence heads that differ only in the target used for supervision.

\paragraph{MCP} MCP, described in Section~\ref{sec:existing-targets}, serves as the simplest confidence baseline.

\paragraph{Predictive Entropy}
Predictive entropy~\cite{nguyen2022measure_entropy_top_2} measures the uncertainty of the complete predictive distribution,
\begin{equation}
  \PE(x) \;=\; -\sum_{k=1}^{K} \prob(Y=k\mid w,x)\, \log \prob(Y=k\mid w,x).
  \label{eq:entropy}
\end{equation}
Lower entropy indicates higher confidence; we therefore use $-\PE(x)$ as the confidence score. Unlike MCP, it accounts for the entire predictive distribution rather than only the predicted class.

\paragraph{Top-2 Margin}
The Top-2 Margin~\cite{nguyen2022measure_entropy_top_2} measures the separation between the predicted class and its closest competitor,
\begin{equation}
  \Margin(x) \;=\; \phat \;-\;
    \max_{k \neq \yhat} \prob(Y=k\mid w,x).
  \label{eq:top_two_margin}
\end{equation}
Larger margins correspond to higher confidence and indicate predictions that lie farther from the decision boundary.

\paragraph{Energy Score}
The energy score~\cite{liu2020energy} is computed directly from the pre-softmax logits,
\begin{equation}
  \Energy(x) \;=\; -T \log \sum_{k=1}^{K} \exp\bigl(g_k(x)/T\bigr),
  \label{eq:energy}
\end{equation}
where $T=1$ in our experiments. Higher confidence corresponds to lower energy, so we use $-\Energy(x)$ as the confidence score.

\paragraph{MC Dropout}
It approximates Bayesian model averaging by performing $T_{\mathrm{mc}}$ stochastic forward passes with dropout enabled inference~\cite{gal2016dropout}, yielding sampled weights $w_t$ and the averaged predictive distribution
\begin{equation}
  \bar{p}_k(x) \;=\; \frac{1}{T_{\mathrm{mc}}}
    \sum_{t=1}^{T_{\mathrm{mc}}} \prob(Y=k\mid w_t,x),
  \label{eq:mcdropout}
\end{equation}
where $T_{\rm mc}=50$ in our experiments. The confidence score is taken as the negative predictive entropy of the averaged predictive distribution.

\paragraph{\emph{TCP} and \emph{TCP\textsubscript{n}}}
Here we train an auxiliary confidence head, implemented as a two-layer MLP with a sigmoid output using the target formulations described in Section~\ref{sec:existing-targets}. To ensure a fair comparison, both are trained using the same confidence head architecture and optimization strategy as the proposed TCP$_\alpha$ model.

\subsection{Ablation Study Design}

In addition to the proposed imbalance-aware training strategy described in Section~\ref{sec:method}, we evaluate several alternative approaches for learning the confidence head.
under the highly imbalanced TCP$_\alpha$ target distribution. 

\subsubsection{Regression with Imbalance Correction}
As a baseline, we first train the confidence head using the mean squared error loss of Eq.~(\ref{eq:conf-loss}) without any imbalance handling. We then compare it with several regression strategies designed to alleviate target imbalance. Label Distribution Smoothing (LDS)~\cite{yang2021delving} smooths the empirical target distribution and reweights samples according to the smoothed density, while its feature-space counterpart, Feature Distribution Smoothing (FDS)~\cite{yang2021delving}, performs smoothing in the learned feature representation. We also evaluate simple error upsampling, which creates duplicate copies of error samples during training. 

\subsubsection{Binary Discrimination}
Instead of regressing continuous confidence targets, we also investigate treating confidence estimation as a binary discrimination problem by assigning each sample a success/error label $\ell_i$ as:
\begin{equation}
  \ell_i = \ind{i \in \Dpos} =
  \begin{cases} 1, & \yhat_i = y_i^*,\\ 0, & \yhat_i \neq y_i^*, \end{cases}
\end{equation}
and training the confidence head using binary cross-entropy.
We additionally evaluate focal loss~\cite{lin2017focal}, which emphasizes difficult minority samples by down-weighting well-classified examples.
During inference, we take its predicted scalar output directly as the confidence score. 

\subsubsection{Per-Class Models}
Finally, we investigate whether replacing a single global confidence head with class-specific models improves performance. Two binary discrimination variants are considered. The first one is trained using only the successes and errors of the corresponding class given by:
\begin{equation}
  \Dset^{(k)}_{\text{cls}}
  = \underbrace{\{(x_i,1):\ystar_i=k,\;\yhat_i=k\}}_{\text{class-$k$ successes}}
  \;\cup\;
  \underbrace{\{(x_i,0):\ystar_i=k,\;\yhat_i\neq k\}}_{\text{class-$k$ errors}}.
  \label{eq:perclass-own}
\end{equation}
Hence, we train $k$ separate models. In another configuration, we also incorporate \emph{confusing negatives}, i.e., samples from other classes incorrectly predicted as the target class by the base classifier, denoted by $\mathcal{H}_k$:
\begin{equation}
  \mathcal{H}_k = \{\,(x_i,0):\, i\in\Dneg,\ \yhat_i = k\,\}.
  \label{eq:confusing}
\end{equation}
So here, we work with:
$\Dset^{(k)} = \Dset^{(k)}_{\text{cls}} \cup \mathcal{H}_k$, giving higher weight to the confusing $\mathcal{H}_k$ in the regression loss. 
Additionally, we evaluate per-class regression models trained on $\Dset^{(k)}$ using the proposed TCP$_\alpha$ targets in place of binary targets $\ell_i$. During inference, each sample is routed to the confidence head corresponding to its predicted class.

\subsection{Evaluation Metrics} \label{sec:metrics}
We evaluate using the standard failure prediction metrics: AUPR-Error (AUPR-E), FPR@95\%TPR, AUROC, and AUPR-Success (AUPR-S). 
Given the severe class imbalance, AUPR-E and FPR@95\%TPR are the primary metrics because they evaluate performance on the rare error class. AUROC measures overall separability, while AUPR-S is reported only for completeness.

\section{Results and Discussion}
\label{sec:results}

\begin{table}[t]
\centering
\caption{Performance on r\=aga identification (PIM). All values are percentages, best in bold. (R) denotes the target trained with vanilla regression, (P) denotes training with our proposed training strategy (Sec:~\ref{sec:method}). T2M: Top-2-Margin score, MC-D: MC Dropout}
\label{tab:main_all_results}
\small
\begin{tabular}{lcccc}
\toprule
Method & FPR@95 $\downarrow$ & AUPR-E $\uparrow$ & AUPR-S $\uparrow$ & AUROC $\uparrow$ \\
\midrule

MCP                & 42.25 & 50.59 & 96.94 & 83.60 \\
Entropy & 41.00 & 53.70 & 97.10 & 84.49 \\
T2M       & 43.00 & 44.25 & 96.70 & 82.53 \\
Energy       & 37.50 & 56.87 & 97.47 & 86.25 \\
MC-D         & 40.75 & 52.37 & 97.13 & 84.70 \\

\midrule
TCP (R)                          & 46.02    & 25.35    & 98.85    & 85.11    \\
$\mathrm{TCP}_{n}$ (R)           & 61.74 & 32.64 & 98.42 & 81.74 \\
$\mathrm{TCP}_{\alpha}$ (R) & 53.68 & 74.30 & 98.89 & 91.26 \\
TCP (P)                          & 12.24 & 83.13 & 99.76 & 97.37 \\
$\mathrm{TCP}_{n}$ (P)           & 41.24 & 76.01 & 99.20 & 92.74 \\
$\mathrm{TCP}_{\alpha}$ (P) & \textbf{1.60} & \textbf{95.96} & \textbf{99.95} & \textbf{99.46} \\



\bottomrule
\end{tabular}
\end{table}

\subsection{Comparison with Existing Baselines}

Table~\ref{tab:main_all_results} compares the proposed method against both training-free confidence scores and learned confidence estimators on the PIM dataset. Among the training-free methods, all confidence scores perform similarly, which is expected since MCP, predictive entropy, Top-2 Margin, and the energy score are deterministic functions of the same classifier output, while MC Dropout estimates uncertainty from multiple stochastic forward passes of the same model. Consequently, they induce similar rankings over correct and incorrect predictions. The energy score is the strongest training-free baseline, achieving an AUPR-E of 56.87 and an AUROC of 86.25, whereas MC Dropout offers no noticeable improvement despite requiring 50 forward passes.

Among the learned confidence models, TCP$_\alpha$ consistently outperforms both TCP and TCP$_n$ under identical training settings. Even with vanilla regression, TCP$_\alpha$ substantially improves failure prediction, confirming that the proposed target formulation itself is more informative. Applying the proposed imbalance-aware training strategy further improves all three learned targets, with TCP$_\alpha$ achieving the best overall performance (AUPR-E = 95.96, AUROC = 99.46 and FPR@95\%TPR = 1.60). TCP$_n$ consistently performs the worst among the learned targets because its formulation assigns high target values to many near-boundary errors, making them difficult for the confidence head to distinguish from correct predictions. The rejection curves in Fig.~\ref{fig:sample_rejection_curve} demonstrate the practical impact of this improvement: rejecting only the least-confident $8\%$ of predictions increases the retained macro-F1 from 0.89 to approximately 0.98. The confidence density plots (Fig.~\ref{fig:best_model_density_plot}) further show that TCP$_\alpha$ produces substantially cleaner separation between correct and incorrect predictions than the existing targets.

\begin{figure}[htbp]
\includegraphics[width=0.9\columnwidth]{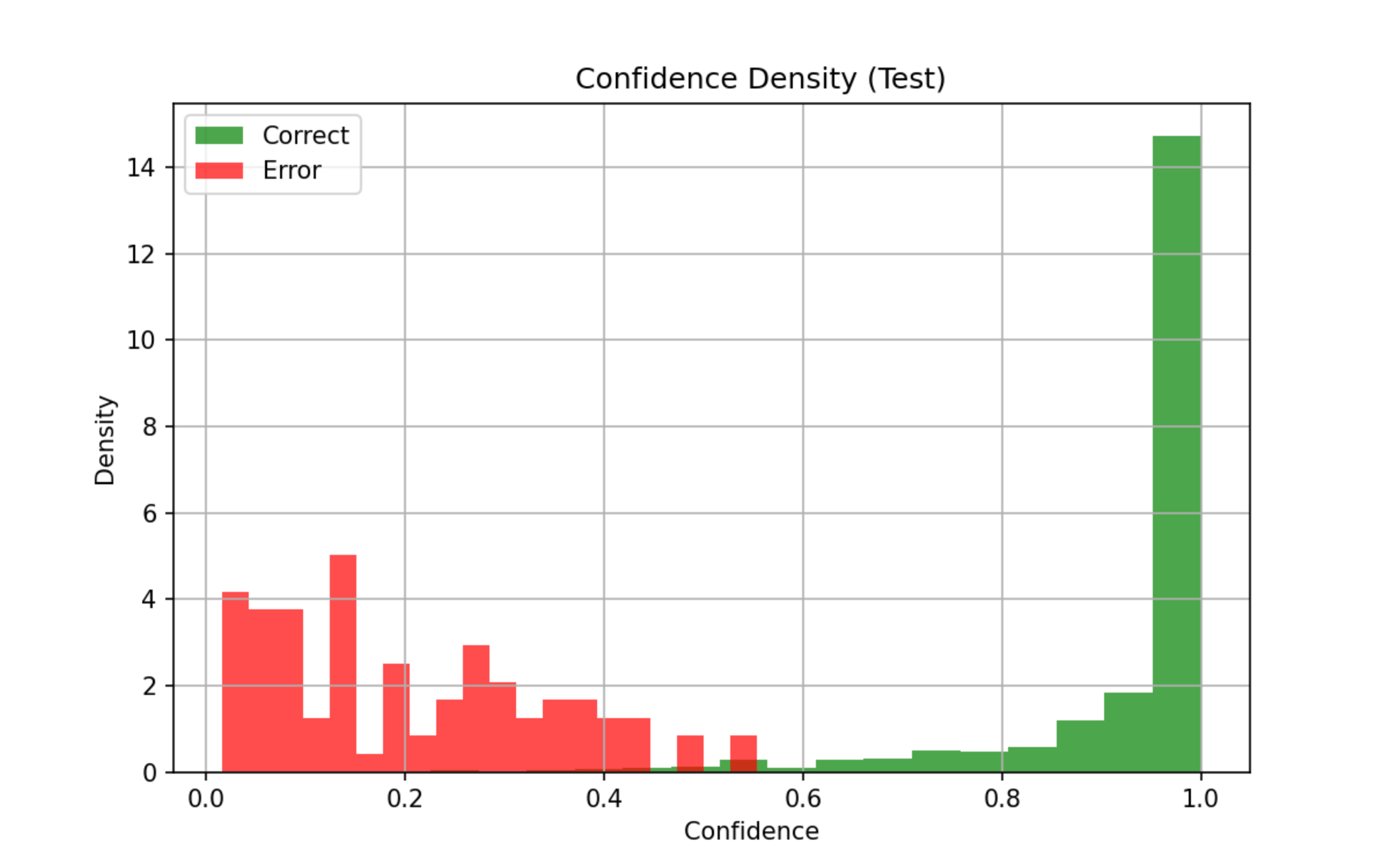}
\caption{Distribution of predicted confidence scores for $\mathrm{TCP}_{\alpha}$ on the PIM test set. The proposed target produces clear separation between correct and incorrect predictions.}
\label{fig:best_model_density_plot}
\end{figure}

\begin{figure}[htbp]
\includegraphics[width=0.9\columnwidth]{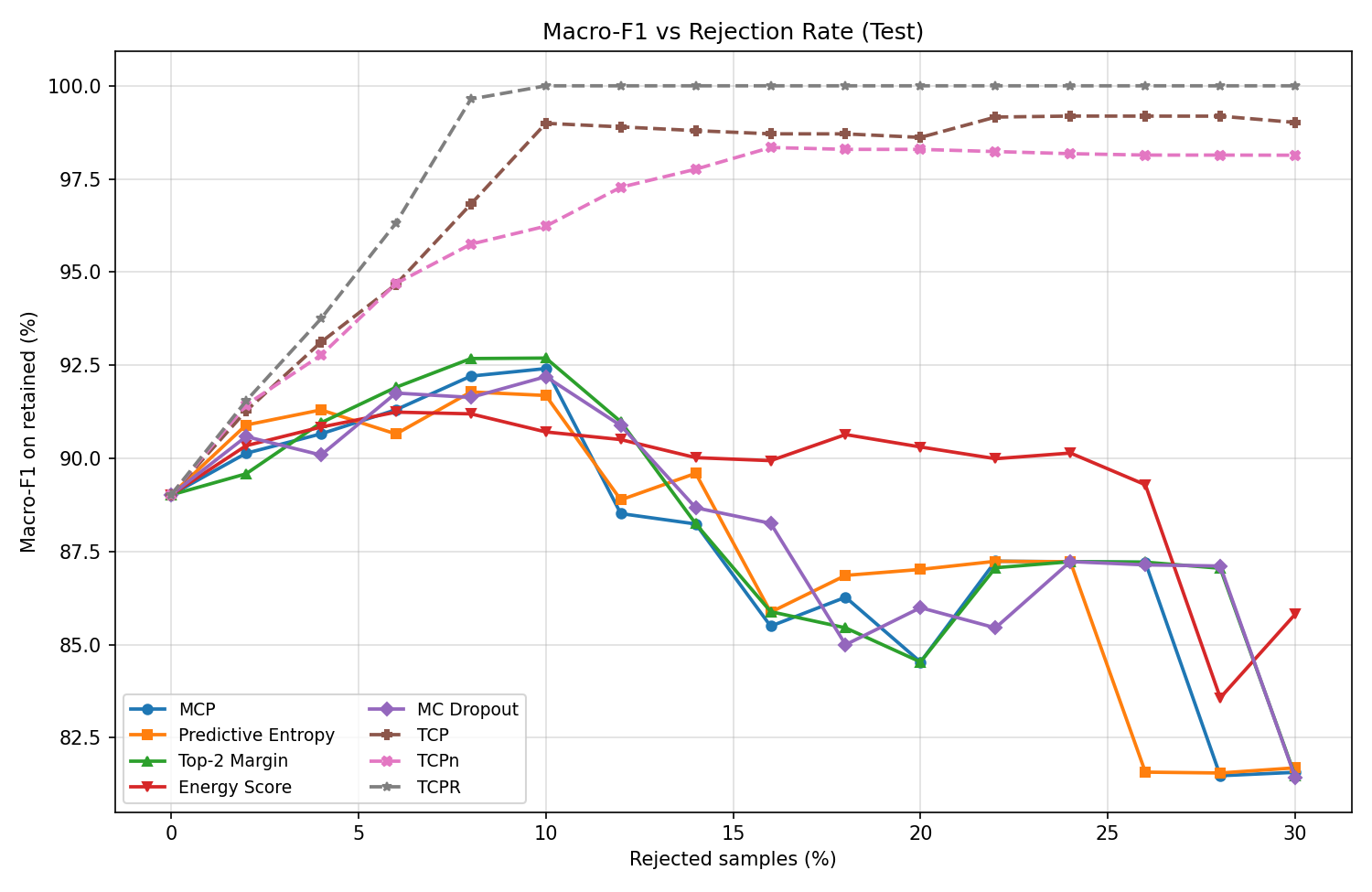}
\caption{Macro-F1 of the retained predictions as increasingly low-confidence samples are rejected using confidence scores from baselines and proposed method.}
\label{fig:sample_rejection_curve}
\end{figure}


\subsection{Ablation Studies}
\subsubsection{Training Strategy}
Table~\ref{tab:proposed_ablation} compares different strategies for learning the TCP$_\alpha$ confidence head under severe target imbalance. All methods use the same confidence head and TCP$_\alpha$ target, differing only in the optimization strategy.

Among the regression-based methods, LDS substantially improves over vanilla regression, but the confidence distributions remain considerably more diffuse than those produced by the proposed method. Instead of concentrating the correctly classified samples near confidence one, LDS spreads them over a broad high-confidence region while simultaneously shifting many error samples toward lower confidence. Consequently, the overlap between the two populations is reduced but not eliminated, with a noticeable number of successes still assigned relatively low confidence. The proposed training strategy, in contrast, produces a much sharper separation between the two distributions, resulting in substantially better failure prediction performance.
FDS provides no benefit and error upsampling performs poorly, likely because repeatedly sampling the same small set of error examples increases overfitting without improving diversity. 
Treating confidence estimation as a binary discrimination problem using binary cross-entropy or focal loss is consistently inferior to confidence regression, indicating that the continuous TCP$_\alpha$ target contains richer supervision than hard success-error labels. 

Treating confidence estimation as a binary discrimination problem using binary cross-entropy or focal loss is consistently inferior to confidence regression, indicating that the continuous TCP$_\alpha$ target contains richer supervision than hard success-error labels. Since binary models learn only two target values, even a few errors predicted as successes or a few successes assigned low confidence significantly degrade failure prediction metrics, as observed in their density plots. Among the per-class models, incorporating the confusing negatives $\mathcal{H}_k$ improves performance over the basic per-class discriminator, confirming that hard negatives are beneficial. However, neither the per-class classifiers nor the per-class regressors outperform a single global confidence model. For the regressors in particular, the small number of error samples available within each class provides insufficient supervision to learn reliable class-specific confidence functions. Such approaches may become more effective on datasets with substantially larger numbers of error samples per class.
The proposed imbalance-aware training strategy achieves the best performance by a large margin, improving AUPR-E from 74.30 to 95.96 while reducing FPR@95\%TPR from 53.68 to only 1.60. The corresponding confidence distributions (Fig.~\ref{fig:best_model_density_plot}) show a clear separation between correct and incorrect predictions, and we use this configuration for all subsequent experiments.

\begin{table}[t]
\centering
\caption{Ablation studies for training $\mathrm{TCP}_\alpha$ target on PIM. All rows use the same target and the same confidence head, and differ only in how the imbalance is handled. EU: Error Upsampling, $\mathrm{TCP}_\alpha$-P: proposed (Sec~\ref{sec:method}) }
\label{tab:proposed_ablation}
\small
\begin{tabular}{lcccc}
\toprule
Strategy & FPR@95 $\downarrow$ & AUPR-E $\uparrow$ &
AUPR-S $\uparrow$ & AUROC $\uparrow$ \\
\midrule
reg       & 53.68 & 74.30 & 98.89 & 91.26 \\
FDS                     & 54.71 & 65.92 & 98.89 & 90.07 \\
EU        & 66.14 & 29.91 & 98.34 & 81.67 \\
LDS                     & 21.98 & 85.70 & 98.94 & 94.42 \\
$\mathrm{TCP}_\alpha$ (P)  & \textbf{1.60}  & \textbf{95.96} & \textbf{99.95} & \textbf{99.46} \\
Binary & 76.08 & 25.92 & 97.13 & 76.75 \\
Focal  & 66.67  & 24.28 & 97.62 & 78.52 \\
P-C  & 68.79  & 50.61 & 97.95 & 82.10 \\ 
P-C-W  & 63.65  & 45.92 & 98.52 & 86.17 \\ 
K-reg         & 77.26    & 29.52    & 96.99    & 79.14    \\ 
\bottomrule
\end{tabular}
\end{table}

\begin{table}[t]
\centering
\caption{Effect of the success-to-error ratio $r$ in each training batch on PIM. Batch size is fixed at 64.}
\label{tab:success_error_ratio}
\small
\begin{tabular}{lccc}
\toprule
r & FPR@95 $\downarrow$ &
AUPR-E $\uparrow$ & AUROC $\uparrow$ \\
\midrule
5.4 & 39.36 & 91.09 & 95.08 \\
3.3 & 10.62 & \textbf{97.41} & 97.50 \\
2.2 & \textbf{1.60} & 95.96 & \textbf{99.46} \\
1.6 & 7.53 & 89.21 & 98.14 \\
1.1 & 7.63 & 82.19 & 98.25 \\
0.6 & 10.17 & 73.69 & 97.31 \\
0.3 & 13.28 & 52.81 & 95.19 \\
\bottomrule
\end{tabular}
\end{table}

\subsubsection{Effect of the Success:Error Batch Ratio}
Table~\ref{tab:success_error_ratio} studies the influence of the success-to-error ratio $r$ within each training mini-batch while keeping all other settings fixed. Very small numbers of error samples bias learning toward the dominant success class, leading to high FPR@95\%TPR despite good average ranking. 
Conversely, increasing the proportion of error samples gradually degrades AUPR-E. For a ratio of 0.3:1, we observe that the error samples are concentrated below a confidence value of 0.5, but the confidence assigned to correctly classified samples becomes increasingly dispersed in the whole range of (0,1) instead of remaining concentrated near one. As a result, the overlap between the two populations increases, making them less distinguishable. A ratio of approximately $2.2{:}1$ (20 error samples in a batch of 64) provides the best trade-off and is therefore used in all subsequent experiments.

\subsubsection{Effect of the Penalty Parameter $\alpha$}

Table~\ref{tab:alpha_ablation} evaluates the influence of the penalty parameter $\alpha$. Although Theorem~\ref{thm:separation} guarantees complete target separation even for $\alpha=0$, introducing a positive penalty provides an additional margin between the success and error targets, making the regression problem easier to learn. Accordingly, all positive values of $\alpha$ outperform $\alpha=0$, with the best performance obtained at $\alpha=1/K$. Beyond this point, the performance varies only slightly.
We use $\alpha=1/K$ throughout the experiments.


\begin{table}[t]
\centering
\caption{Effect of the penalty $\alpha$ on $TCP_\alpha$ performed on PIM dataset. 
}
\label{tab:alpha_ablation}
\small
\begin{tabular}{lcccc}
\toprule
$\alpha$  & FPR@95 $\downarrow$ &
AUPR-E $\uparrow$ & AUROC $\uparrow$ \\
\midrule
$0$       & 11.02 & 88.10 & 97.42 \\
$1/2K$    & 3.48 & 91.15 & 97.29 \\
$1/K$     & 1.60 & 95.96 & 99.46 \\
$2/K$     & 6.31 & 90.20 & 96.90 \\
$0.5$     & 8.10 & 87.49 & 96.77 \\
$1$       & 13.09 & 89.33 & 98.02 \\
$5$     & 8.38 &90.40 & 96.71 \\ %
\bottomrule
\end{tabular}
\end{table}



\begin{table}[t]
\centering
\caption{Performance on Saraga dataset. The base classifier is the PIM model in all rows and is never retrained. reg represents vanilla regression training. $\mathrm{TCP}_\alpha$ ($P_s$): best model trained on PIM fine-tuned on Saraga 
using s\% labeled training samples, test set being fixed for all.}
\label{tab:saraga}
\small
\begin{tabular}{lcccc}
\toprule
Model & FPR@95 $\downarrow$ & AUPR-E $\uparrow$ &
AUPR-S $\uparrow$ & AUROC $\uparrow$ \\
\midrule
MCP & 73.38 & 60.39 & 89.85 & 80.61 \\

reg & 52.83 & 80.14 & 92.05 & 88.56 \\
$\mathrm{TCP}_\alpha$ $(P_0)$  & 87.5 & 40.16 & 78.15 & 62.61 \\
$\mathrm{TCP}_\alpha$ $(P_{5})$ & 5.0 & 97.90 & 99.50 & 98.21 \\
$\mathrm{TCP}_\alpha$ $(P_{10})$ & 3.12 & 97.28 & 99.46 & 98.78 \\
$\mathrm{TCP}_\alpha$ $(P_{20})$ & 2.50 & 97.92 & 99.37 & 98.60 \\
$\mathrm{TCP}_\alpha$ $(P_{40})$ & 2.25 & 98.77 & 99.70 & 99.36 \\
\bottomrule
\end{tabular}
\end{table}

\subsection{Transfer to Domain Shift}
Table~\ref{tab:saraga} evaluates the proposed confidence model under domain shift using the Saraga dataset. The base r\=aga classifier trained on PIM is kept fixed throughout, and only the confidence head is adapted. Directly applying the confidence head learned on PIM to Saraga ($\mathrm{TCP}_\alpha$$(P_0)$) performs poorly, even below MCP, indicating that the learned confidence distribution is dataset-specific and does not transfer reliably across recording conditions.

Encouragingly, fine-tuning the confidence head using only $5\%$ labeled samples from the target domain restores almost all of the performance, improving AUPR-E from 40.16 to 97.90 while reducing FPR@95\%TPR from 87.5 to 5.0. Increasing the adaptation set beyond $5\%$ yields only marginal improvements, suggesting that a small amount of labeled target-domain data is sufficient to adapt the confidence model to the new domain.

\begin{table}[t]
\centering
\caption{Failure prediction on ornamentation detection. The
configuration is the one selected on r\=aga identification; no
additional search was performed.}
\label{tab:ornamentation_rod}
\small
\begin{tabular}{lcccc}
\toprule
Method & FPR@95 $\downarrow$ & AUPR-E $\uparrow$ &
AUPR-S $\uparrow$ & AUROC $\uparrow$ \\
\midrule
MCP  & 68.66 & 43.34 & 91.98 & 77.14 \\
TCP  & 33.46 & 76.62 & 97.60 & 92.31 \\
$\mathrm{TCP_n}$  & 47.88 & 71.21 & 96.77 & 89.22    \\
$\mathrm{TCP}_\alpha$ & \textbf{19.13}& \textbf{86.01}&\textbf{98.70}&\textbf{95.80} \\
Binary & 35.30 & 81.58 & 97.74 & 92.90 \\
Focal & 35.10 & 81.41 & 97.91 & 92.96\\
\bottomrule
\end{tabular}
\end{table}

\begin{figure}[htbp]
\includegraphics[width=0.9\columnwidth]{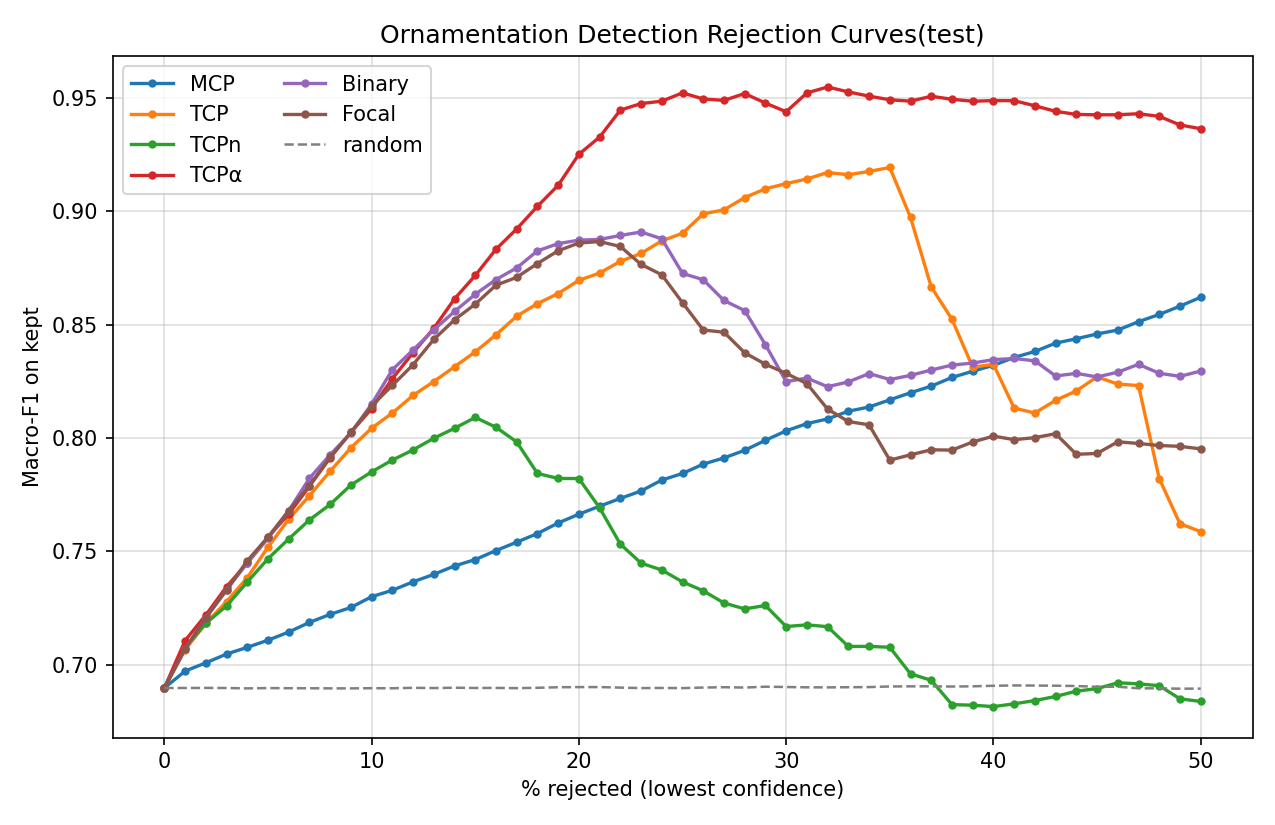}
\caption{Rejection Curves for Ornamentation detection task}
\label{fig:Ornament_rejection_curve}
\end{figure}
\begin{figure}[htbp]
\includegraphics[width=0.9\columnwidth]{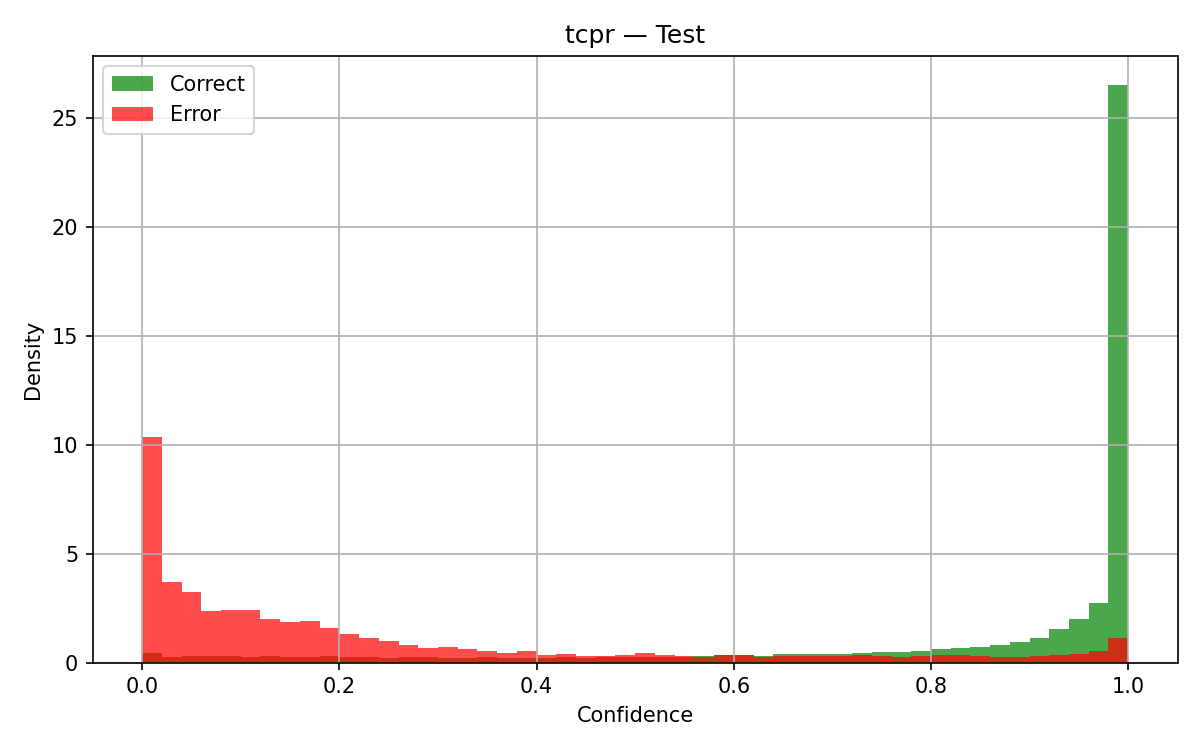}
\caption{Density plot for the best performing model for Ornamentation detection}
\label{fig:best_saraga_model_density_plot}
\end{figure}

\subsection{Ornamentation Detection}
Table~\ref{tab:ornamentation_rod} evaluates the proposed method on frame-wise ornamentation detection using the same configuration selected for r\=aga identification. TCP$_\alpha$ again achieves the best failure prediction performance across all metrics, improving AUPR-E to 86.01 while reducing FPR@95\%TPR to 19.13. Similar to the r\=aga identification task, the binary discrimination models improve upon MCP but remain inferior to confidence regression, confirming that the continuous TCP$_\alpha$ target provides richer supervision than hard success-error labels.

The rejection curves in Fig.~\ref{fig:Ornament_rejection_curve} further demonstrate the practical benefit of the proposed confidence estimator. Rejecting approximately $20\%$ of the least-confident frames increases the macro-F1 from 0.69 to approximately 0.95, after which the curve saturates, indicating that most erroneous predictions have already been removed. In contrast, TCP and the binary models peak earlier and then decline, reflecting their weaker separation between correct and incorrect predictions.

\section{Conclusion and Future Work}
\label{sec:conclusion}

In this work, we propose TCP$_\alpha$, a novel confidence target for post-hoc failure prediction that provides complete separation between correct and incorrect predictions with a margin independent of the number of classes. We study several training strategies for the resulting imbalanced regression problem and identify an imbalance-aware configuration that transfers across tasks. Experiments on r\=aga identification and ornamentation detection demonstrate that the proposed approach consistently outperforms existing confidence targets and commonly used confidence scores. Rejecting predictions using the proposed confidence scores yields substantial improvements in the retained predictions, showing that a small amount of expert review directed by TCP$_\alpha$ is sufficient to make the remaining predictions highly reliable. We also observe that only a small amount of labeled target-domain data is sufficient to adapt the confidence head to a new domain.

A promising direction for future work is to develop class-conditional confidence estimation that not only identifies unreliable predictions but also indicates plausible alternative classes. This is particularly useful for music analysis, where confusing two closely related r\=agas is fundamentally different from confusing unrelated ones. Such information can provide more meaningful feedback in downstream applications such as music analysis and pedagogy.

\bibliographystyle{IEEEtran}
\bibliography{bib}

\end{document}